\newtheorem{theorem}{Theorem}[section]
\newtheorem{lemma}[theorem]{Lemma}
\newtheorem{problem}[theorem]{Problem}
\theoremstyle{definition}
\newcommand{\dst}{\displaystyle}
\newcommand \tr {\mathrm{Tr}\:}
\newcommand \sgn{\mathrm{sgn}\:}
\newcommand{\R}{\ensuremath{\mathbb{R}}}
\newcommand{\Co}{\ensuremath{\mathbb{C}}}
\newcommand{\one}{{\mathbf{1}}}
\newcommand{\inner}[2]{{\langle #1, #2 \rangle}}
\newcommand{\Abs}[1]{{\left|{#1}\right|}}
\def \e {\varepsilon}
\def \C {\mathbb{C}}
\newcommand{\ac}{\ensuremath{\mathcal{A}}}
\newcommand{\bc}{\ensuremath{\mathcal{B}}}
\newcommand{\eb}{\ensuremath{\mathbf{e}}}
\newcommand{\fb}{\ensuremath{\mathbf{f}}}
\def \< {\langle}
\def \> {\rangle}
\newcommand{\abs}[1]{{\left|{#1}\right|}}
\newcommand{\scal}[1]{{\left\langle{#1}\right\rangle}}
\begin{document}

\title[Positive definite functions and MUBs]{An application of positive definite functions to the problem of MUBs}

\author[M. N. Kolountzakis]{Mihail N. Kolountzakis}
\address{M. N. K.: Department  of  Mathematics  and  Applied  Mathematics,  University  of  Crete,  Voutes
Campus, 700 13 Heraklion, Greece.}
\email{kolount@gmail.com}

\author[M. Matolcsi]{M\'at\'e Matolcsi}
\address{M. M.: Budapest University of Technology and Economics (BME),
H-1111, Egry J. u. 1, Budapest, Hungary (also at Alfr\'ed R\'enyi Institute of Mathematics,
Hungarian Academy of Sciences, H-1053, Realtanoda u 13-15, Budapest, Hungary)}
\email{matomate@renyi.hu}

\author[M. Weiner]{Mih\'aly Weiner}
\address{M. W.: Budapest University of Technology and Economics (BME),
H-1111, Egry J. u. 1, Budapest, Hungary}
\email{mweiner@renyi.hu}

\thanks{M. Matolcsi was supported by the ERC-AdG 321104 and by NKFIH-OTKA Grant No. K104206, M. Weiner was supported by the ERC-AdG 669240 QUEST ``Quantum Algebraic Structures and Models'' and by NKFIH-OTKA Grant No. K104206}


\begin{abstract}
We present a new approach to the problem of mutually unbiased bases (MUBs), based on positive definite functions on the unitary group. The method provides a new proof of the fact that there are at most $d+1$ MUBs in $\Co^d$. It may also lead to a proof of non-existence of complete systems of MUBs in dimension 6 via a conjectured algebraic identity.
\end{abstract}

\maketitle

\bigskip

{\bf 2010 Mathematics Subject Classification.} Primary 15A30,
Secondary 43A35, 05B10

{\bf Keywords and phrases.} {\it  Mutually unbiased bases, positive definite functions, unitary group}

\section{Introduction}

In this paper we present a new approach to the problem of mutually unbiased bases (MUBs) in $\Co^d$. Our approach has been motivated by two recent results in the literature. First, in \cite{mubfourier} one of the present authors described how the Fourier analytic formulation of Delsarte's LP bound can be applied to the problem of MUBs. Second, in \cite[Theorem 2]{vallentinfilho} F. M. Oliveira Filho and F. Vallentin proved a general optimization bound which can be viewed as a generalization of Delsarte's LP bound to non-commutative settings (and they applied the theorem to packing problems in Euclidean spaces). As the MUB-problem is essentially a problem over the unitary group, it is natural to combine the two ideas above. Here we present another version of the non-commutative Delsarte scheme in the spirit of \cite[Lemma 2.1]{mubfourier}. Our formulation in Theorem \ref{ncdelsarte} below is somewhat less general than \cite[Theorem 2]{vallentinfilho}, but makes use of the underlying group structure and is very convenient for applications. It fits the MUB-problem naturally, and leads us to consider positive definite functions on the unitary group.

\bigskip

The paper is organized as follows. In the Introduction we recall some basic notions and results concerning mutually unbiased bases (MUBs). In Section \ref{sec2} we describe how the problem of MUBs fits into a non-commutative version of Delsarte's scheme. We then apply this method to give a new proof of the fact that there are at most $d+1$ MUBs in $\Co^d$. Finally, in Section \ref{sec3} we speculate on how the non-existence of complete systems of MUBs could be proved in dimension 6 via an algebraic identity conjectured in \cite{mubmols}.

\bigskip

Recall that two orthonormal bases in $\Co^d$,
$\ac=\{\eb_1,\ldots,\eb_d\}$ and $\bc=\{\fb_1,\ldots,\fb_d\}$ are
called \emph{unbiased} if for every $1\leq j,k\leq d$,
$\abs{\scal{\eb_j,\fb_k}}=\dst\frac{1}{\sqrt{d}}$. A collection
$\bc_1,\ldots\bc_m$ of orthonormal bases is said to be (pairwise)
\emph{mutually unbiased} if any two of them are
unbiased. What is the maximal number of mutually unbiased bases (MUBs) in $\Co^d$? This problem has its origins in quantum information theory, and has received considerable attention over the past decades (see e.g. \cite{durt} for a recent comprehensive survey on MUBs). The following upper bound is well-known (see e.g. \cite{BBRV,BBELTZ,WF}):

\begin{theorem}\label{thm1}
The number of mutually unbiased bases in $\Co^d$ is less than or equal to $d+1$.
\end{theorem}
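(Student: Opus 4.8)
The plan is to prove the standard upper bound $d+1$ on the number of MUBs using a dimension-counting argument on a suitable space of Hermitian operators. The key object is the projection onto each basis vector: if $\bc_r = \{\fb^{(r)}_1,\dots,\fb^{(r)}_d\}$ is one of the bases, I consider the rank-one orthogonal projections $P^{(r)}_j = \fb^{(r)}_j (\fb^{(r)}_j)^\ast$. These live in the real vector space of $d\times d$ Hermitian matrices, which has real dimension $d^2$. Passing to the traceless parts $Q^{(r)}_j = P^{(r)}_j - \frac{1}{d}I$ lands us in the subspace of traceless Hermitian matrices, of real dimension $d^2-1$, and the whole argument will be a claim about linear independence there.

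The heart of the matter is the inner product computation using the Hilbert--Schmidt form $\inner{A}{B} = \tr(A^\ast B)$. First I would record the within-basis relations: since the $P^{(r)}_j$ are orthogonal rank-one projections summing to $I$, one has $\tr(P^{(r)}_j P^{(r)}_k) = \d_{jk}$, which gives $\tr(Q^{(r)}_j Q^{(r)}_k) = \d_{jk} - \frac{1}{d}$. The crucial cross-basis computation uses the unbiasedness hypothesis: for $r\neq s$, $\tr(P^{(r)}_j P^{(s)}_k) = \abs{\scal{\fb^{(r)}_j,\fb^{(s)}_k}}^2 = \frac{1}{d}$, so the traceless parts satisfy $\tr(Q^{(r)}_j Q^{(s)}_k) = 0$. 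This orthogonality of traceless projections across different bases is exactly what makes the counting work.

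Next I would convert these inner product identities into a linear-independence or rank statement. Within a single basis the $d$ operators $Q^{(r)}_1,\dots,Q^{(r)}_d$ satisfy $\sum_j Q^{(r)}_j = 0$, so they span a subspace $V_r$ of dimension exactly $d-1$ (their Gram matrix, $\d_{jk}-\frac1d$, has rank $d-1$). Because $Q^{(r)}_j$ is orthogonal to $Q^{(s)}_k$ whenever $r\neq s$, the subspaces $V_1,\dots,V_m$ are mutually orthogonal inside the $(d^2-1)$-dimensional space of traceless Hermitian matrices. Mutual orthogonality forces the dimensions to add, giving $\sum_{r=1}^m (d-1) = m(d-1) \le d^2-1 = (d-1)(d+1)$, and dividing by $d-1$ yields $m \le d+1$.

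The step I expect to be the main obstacle, or at least the one requiring the most care, is establishing that each $V_r$ genuinely has dimension $d-1$ and that the orthogonality across bases is strict enough to guarantee $\bigoplus_r V_r$ is a direct orthogonal sum, rather than merely pairwise orthogonal with possible degeneracies. Concretely, I must verify that the combined Gram matrix of all the $Q^{(r)}_j$ has the right rank, namely that the only linear dependences are the within-basis relations $\sum_j Q^{(r)}_j = 0$; equivalently, that a traceless Hermitian matrix orthogonal to every $V_s$ with $s\neq r$ cannot secretly reduce the effective dimension contributed by $V_r$. Once the block-diagonal structure of the Gram matrix (a direct sum of the rank-$(d-1)$ blocks $\d_{jk}-\frac1d$) is confirmed, the dimension count is immediate. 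An alternative route that sidesteps some of this bookkeeping is to use the resolution of identity $\sum_j P^{(r)}_j = I$ for each $r$ and argue directly that the union of all projections spans a space of dimension $m(d-1)+1$ inside the full $d^2$-dimensional Hermitian space, but the orthogonality argument above seems cleanest.
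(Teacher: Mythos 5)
Your argument is correct and complete: the Gram-matrix computations, the rank-$(d-1)$ claim for each $V_r$, and the additivity of dimensions for pairwise orthogonal subspaces are all sound. (Your closing worry is unnecessary: pairwise orthogonality of $V_1,\dots,V_m$ already forces their sum to be direct --- given a vanishing sum $v_1+\dots+v_m=0$ with $v_r\in V_r$, pairing with $v_r$ gives $\norm{v_r}^2=0$ --- so no further check of the combined Gram matrix is needed.) However, this is not the paper's route; it is the classical proof, essentially that of \cite{BBRV,WF}, which the paper cites precisely as the known argument it intends to replace with a new one. The paper instead derives the bound from a non-commutative Delsarte scheme (Theorem \ref{ncdelsarte}): a system of $m$ MUBs is encoded as points $b_1,\dots,b_m\in U(d)$ all of whose quotients $b_j^{-1}b_k$ are scaled complex Hadamard matrices, and one exhibits a witness function $h(Z)=-1+\sum_{i,j=1}^d|z_{i,j}|^4$ which is positive definite on $U(d)$, nonpositive (in fact zero) on scaled Hadamard matrices, and satisfies $h(1)=d-1$ and $\int h\,d\mu=\frac{d-1}{d+1}$; the bound $m\le h(1)/\int h\,d\mu=d+1$ then follows from Theorem \ref{ncdelsarte}. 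Interestingly, the two proofs share a kernel: the paper's positive-definiteness argument also rests on the identity $|\inner{u}{v}|^2=\tr P_uP_v$, applied to projections on the tensor square $\C^d\otimes\C^d$ so as to handle the fourth powers $|z_{i,j}|^4$. What your approach buys is elementary self-containment: finite-dimensional linear algebra only, with no Haar measure and no moment computation (the paper needs $\int_{U(d)}\sum_{i,j}|z_{i,j}|^4\,d\mu=\frac{2d}{d+1}$ from \cite{integral}). What the paper's approach buys is flexibility: within the Delsarte framework the bound could in principle be improved in non-prime-power dimensions by adding further terms to the witness function (this is exactly the speculation of its Section \ref{sec3} for $d=6$), whereas the dimension count is rigid and cannot distinguish $d=6$ from a prime power.
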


We will give a new proof of this fact in Theorem \ref{thmgeneral} below. Another important result concerns the existence of complete systems of MUBs in prime-power dimensions (see e.g.
\cite{BBRV,Com1,Com2,Iv,KR,WF}).

\begin{theorem}\label{thm2}
A collection of $d+1$ mutually unbiased bases (called a {\it complete system} of MUBs) exists (and can be constructed explicitly)
if the dimension $d$ is a prime or a prime-power.
\end{theorem}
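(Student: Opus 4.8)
The plan is to give the classical finite-field construction, which is fully explicit. Write $d=q=p^n$ and index the standard orthonormal basis of $\Co^q$ by the field $\mathbb{F}_q$, so that $\ac=\{\eb_x : x\in\mathbb{F}_q\}$ is the first of the $d+1$ bases. Fix the canonical additive character $\chi(t)=\exp(2\pi i\,\mathrm{tr}(t)/p)$, where $\mathrm{tr}\colon\mathbb{F}_q\to\mathbb{F}_p$ is the field trace. Assuming first that $p$ is odd, for each $a\in\mathbb{F}_q$ I would define a candidate basis $\bc_a=\{\vb^{(a)}_b : b\in\mathbb{F}_q\}$ by
$$\vb^{(a)}_b=\frac{1}{\sqrt q}\sum_{x\in\mathbb{F}_q}\chi(ax^2+bx)\,\eb_x.$$
This produces $q$ bases which, together with $\ac$, give the required $d+1=q+1$ candidates.

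First I would verify that each $\bc_a$ is orthonormal. Computing $\inner{\vb^{(a)}_b}{\vb^{(a)}_{b'}}=\frac{1}{q}\sum_x\chi((b'-b)x)$, the quadratic terms cancel because the leading coefficient $a$ is the same in both slots, and orthogonality of additive characters yields $\delta_{b,b'}$. Next I would check pairwise unbiasedness. Unbiasedness of each $\bc_a$ with $\ac$ is immediate, since $\abs{\inner{\eb_x}{\vb^{(a)}_b}}=1/\sqrt q$. For two distinct bases $\bc_a,\bc_{a'}$ with $a\neq a'$, the overlap is
$$\inner{\vb^{(a)}_b}{\vb^{(a')}_{b'}}=\frac{1}{q}\sum_{x\in\mathbb{F}_q}\chi\big((a'-a)x^2+(b'-b)x\big),$$
a quadratic Gauss sum whose leading coefficient $c=a'-a$ is nonzero.

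The crux of the argument is the evaluation $\abs{\sum_{x\in\mathbb{F}_q}\chi(cx^2+ex)}=\sqrt q$ whenever $c\neq0$. In odd characteristic this follows by completing the square: the substitution reduces the sum to $\chi$ of a constant times the pure Gauss sum $\sum_x\chi(cx^2)$, whose modulus is $\sqrt q$ by the standard finite-field Gauss-sum evaluation. Dividing by $q$ then gives $\abs{\inner{\vb^{(a)}_b}{\vb^{(a')}_{b'}}}=1/\sqrt q$, which completes the odd case.

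The main obstacle is the characteristic-two case $p=2$, where $x\mapsto x^2$ is additive and completing the square is unavailable, so the quadratic ansatz degenerates. Here I would replace the quadratic phase by a construction adapted to characteristic two—for instance the Heisenberg--Weyl approach, in which the $d+1$ bases arise as the simultaneous eigenbases of the $d+1$ maximal abelian subgroups into which the generalized Pauli operators $X,Z$ partition modulo phase factors, or equivalently a Galois-ring construction. Verifying unbiasedness then reduces to the analogous character-sum estimate in that setting. I expect this separate treatment of $p=2$, together with the Gauss-sum modulus computation, to be the technical heart of the proof.
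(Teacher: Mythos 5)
The paper itself contains no proof of this theorem: it is quoted as a known result with pointers to the literature (Ivanovic, Wootters--Fields, Bandyopadhyay et al., Klappenecker--R\"otteler), so your proposal must be judged against those standard constructions. Your odd-characteristic argument is exactly the classical one and is sound: the bases $\bc_a=\{\vb^{(a)}_b\}$ with phases $\chi(ax^2+bx)$, orthonormality via orthogonality of additive characters, trivial unbiasedness against the standard basis, and unbiasedness between $\bc_a$ and $\bc_{a'}$ via completing the square (legitimate since $2$ is invertible) together with the modulus-$\sqrt q$ evaluation of the quadratic Gauss sum over $\mathbb{F}_q$. That half of the theorem is complete in all essentials.

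The genuine gap is the characteristic-two case, which you name but do not prove. The theorem asserts existence for \emph{all} prime powers, so the dimensions $d=2,4,8,\dots$ are part of the claim, and for these you offer only a plan (``I would replace the quadratic phase\dots I expect this\dots to be the technical heart''). Carrying it out is genuinely nontrivial: in the Galois-ring approach (Klappenecker--R\"otteler) one works over $GR(4,n)$ with phases that are fourth roots of unity, needs the Teichm\"uller representatives, and must prove the analogous exponential-sum bound there; in the Heisenberg--Weyl approach one must actually exhibit the partition of the $d^2-1$ generalized Pauli operators into $d+1$ maximal commuting classes for $d=2^n$ (not just for $d$ prime, where it is easy) and invoke the theorem that joint eigenbases of such classes are pairwise unbiased. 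Neither route is executed, so as written your proof establishes the theorem only for odd prime powers. Since the paper's own treatment is by citation, the fix is either to carry out one of these constructions in characteristic two or to do what the paper does and cite it explicitly; as a self-contained proof, the proposal is incomplete.
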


However, if the dimension $d=p_1^{\alpha_1}\dots p_k^{\alpha_k}$ is not a prime-power, very little
is known about the maximal number of MUBs.  By a tensor product construction it is easy to see that there are at least $p_j^{\alpha_j}+1$ MUBs in $\C^d$ where $p_j^{\alpha_j}$ is the smallest of the prime-power divisors of $d$. One could be tempted to conjecture the maximal number of MUBs always equals $p_j^{\alpha_j}+1$, but this is already known to be false: for some {\it specific} square dimensions $d=s^2$ a construction of \cite{262} yields more MUBs than $p_j^{\alpha_j}+1$ (the construction is based on orthogonal Latin squares). Another important phenomenon, proved in \cite{mweiner}, is that the maximal number of MUBs cannot be exactly $d$ (it is either $d+1$ or strictly less than $d$).

\medskip

The following basic problem remains open for all non-primepower dimensions:

\begin{problem}\label{MUB6problem}\
Does a complete system of $d+1$ mutually unbiased bases exist in $\Co^d$ if $d$ is not a prime-power?
\end{problem}

For $d=6$ it is widely believed among researchers that the answer is negative, and the maximal number of MUBs is 3. The proof still eludes us, however, despite considerable efforts over the past decade (\cite{BBELTZ,boykin,config,ujbrit,arxiv}). On the one hand, some {\it infinite families} of MUB-triplets
in $\C^6$ have been constructed (\cite{arxiv,Za}). On the other hand, numerical evidence strongly suggests that there exist no MUB-quartets \cite{config,ujbrit,numerical,Za}. For non-primepower dimensions other than 6 we are not aware of any conjectures as to the exact maximal number of MUBs.

\medskip

It will also be important to recall the relationship between mutually unbiased bases
and \emph{complex Hadamard matrices}. A $d\times d$ matrix $H$ is called a complex Hadamard matrix if all its entries have modulus 1 and $\frac{1}{\sqrt{d}}H$ is unitary. Given a collection of MUBs $\bc_1,\dots,\bc_m$ we may regard the bases as unitary matrices $U_1, \dots, U_m$ (with respect to some fixed orthonormal basis), and the condition of the bases being pairwise unbiased amounts to $U_i^\ast U_j$ being a complex Hadamard matrix scaled by a factor of $\frac{1}{\sqrt{d}}$ for all $i\ne j$. That is, $U_i^\ast U_j$ is a unitary matrix (which is of course automatic) whose entries are all of absolute value $\frac{1}{\sqrt{d}}$.

\medskip

A complete classification of MUBs up to dimension 5 (see \cite{BWB}) is based on the classification of complex Hadamard matrices (see \cite{haagerup}). However, the classification of complex Hadamard matrices in dimension 6 is still out of reach despite recent
efforts \cite{BN,k2,Msz,Skinner,generic}.

\medskip

In this paper we will use the above connection of MUBs to complex Hadamard matrices. In particular, we will describe a Delsarte scheme for non-commutative groups  in Theorem \ref{ncdelsarte}, and apply it to the MUB-problem with an appropriate witness function $h(Z)$ on the unitary group $U(d)$ in Theorem \ref{thmgeneral}.

\section{Mutually unbiased bases and a non-commutative Delsarte scheme}\label{sec2}

In this section we describe a non-commutative version of Delsarte's scheme, and show how the problem of mutually unbiased bases fit into this scheme. The commutative analogue was described in \cite{mubfourier}.

\medskip

Let $G$ be a compact group, the group operation being multiplication and the unit element being denoted by 1. We will denote the normalized Haar measure on $G$ by $\mu$. Let a symmetric subset $A=A^{-1}\subset G$, $1\in A$, be given. We think of $A$ as the 'forbidden' set. We would like to determine the maximal cardinality of a set $B=\{b_1, \dots b_m\}\subset G$ such that all the quotients $b_j^{-1}b_k\in A^c\cup \{1\}$ (in other words,  all quotients avoid the forbidden set $A$). When $G$ is commutative, some well-known examples of this general scheme are present in coding theory (\cite{delsarte}), sphere-packings (\cite{cohnelkies}), and sets avoiding square differences in number theory (\cite{ruzsa}). We will discuss the non-commutative case here.

\medskip

Recall that the convolution of $f, g\in L^1(G)$ is defined by $f\ast g (x)=\int f(y)g(y^{-1}x)d\mu(y)$

\medskip

Recall also the notion of positive definite functions on $G$. A function $h: G\to \C$ is called positive definite, if for any $m$ and any collection $u_1, \dots, u_m\in G$, and $c_1, \dots, c_m\in \C$ we have $\sum_{i,j=1}^m h(u_i^{-1}u_j)\overline{c_i}c_j\ge 0$. When $h$ is continuous, the following characterization is well-known.

\begin{lemma}\label{folland}{\rm (cf. \cite[Proposition 3.35]{folland})}
If $G$ is a compact group, and $h: G\to \C$ is a continuous function, the following are equivalent.

\smallskip

(i) $h$ is of positive type, i.e.
\begin{equation}\label{ptype}  \int (\tilde f \ast f)h\ge 0
\end{equation}
for all functions $f\in L^2(G)$ (here $\tilde f(x)=\overline{f(x^{-1})}$)

\smallskip

(ii) $h$ is positive definite
\end{lemma}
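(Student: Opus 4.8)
The plan is to connect the two conditions through the bridge identity
\begin{equation*}
\int (\tilde f \ast f)\, h = \int\int \overline{f(z)}\, f(w)\, h(z^{-1}w)\, d\mu(z)\, d\mu(w),
\end{equation*}
whose right-hand side is the natural continuous analogue of the positive-definiteness sum in (ii). To derive it I would first expand $(\tilde f \ast f)(x) = \int \overline{f(y^{-1})}\, f(y^{-1}x)\, d\mu(y)$, substitute $z = y^{-1}$ using the inversion-invariance of Haar measure on the compact (hence unimodular) group $G$, and then substitute $w = zx$ using left-invariance; Fubini applies since $h$ is continuous, hence bounded, and $\mu$ is finite. Once this bridge is in place, each implication reduces to an approximation argument.

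For (ii) $\Rightarrow$ (i) it suffices to view the double integral as a limit of Riemann-type sums. Partitioning $G$ into small cells with representatives $u_i$ and masses $\Delta_i$, and setting $c_i = f(u_i)\Delta_i$, the sum $\sum_{i,j} h(u_i^{-1}u_j)\overline{c_i}\, c_j$ is nonnegative by hypothesis (ii); letting the mesh tend to zero and invoking the uniform continuity of $h$ on the compact group yields $\int (\tilde f \ast f)\, h \ge 0$. For general $f \in L^2(G)$ one first reduces to continuous $f$ by approximation, using that $\tilde f \ast f$ lies in $C(G)$ with $\norm{\tilde f \ast f}_\infty \le \norm{f}_2^2$, so that the functional $f \mapsto \int (\tilde f \ast f)\, h$ is continuous on $L^2(G)$.

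For the converse (i) $\Rightarrow$ (ii) I would test the hypothesis against approximate identities concentrated at the prescribed points. Fix $u_1, \dots, u_m$ and $c_1, \dots, c_m$, let $V$ run through a shrinking basis of symmetric neighborhoods of $1$, and set $f_V = \sum_i c_i \psi_i^V$, where $\psi_i^V = \mu(u_i V)^{-1}\, \one_{u_i V}$ is a normalized bump at $u_i$. Substituting $f_V$ into the bridge identity gives $\int (\tilde f_V \ast f_V)\, h = \sum_{i,j} \overline{c_i}\, c_j \int\int \overline{\psi_i^V(z)}\, \psi_j^V(w)\, h(z^{-1}w)\, d\mu(z)\, d\mu(w)$. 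As $V \downarrow \{1\}$ each inner double integral converges to $h(u_i^{-1}u_j)$ by continuity of $h$, so the right-hand side tends to $\sum_{i,j} h(u_i^{-1}u_j)\overline{c_i}\, c_j$; since the left-hand side is $\ge 0$ for every $V$ by (i), the limit is $\ge 0$, which is precisely (ii).

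The change of variables and the application of Fubini are routine given unimodularity. I expect the main obstacle to be the justification of the limits in the approximation steps, and in particular controlling, in the (i) $\Rightarrow$ (ii) direction, the error as the bumps concentrate: this rests on the uniform continuity of $h$ on the compact group $G$ together with $\mu$ being a probability measure, so that the normalizations $\mu(u_i V)^{-1}$ are well-behaved and the mass of each $\psi_i^V$ stays correctly localized near $u_i$.
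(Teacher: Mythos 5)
Your proposal is correct, but it takes a genuinely different route from the paper for the simple reason that the paper contains no proof of this lemma at all: the authors explicitly defer to the more general Proposition 3.35 in Folland's book, which treats functions of positive type on locally compact groups, and merely remark that for compact groups the test class in (i) can equivalently be $C(G)$, $L^2(G)$ or $L^1(G)$. Your write-up replaces that citation with a self-contained argument tailored to the compact case: the bridge identity $\int (\tilde f \ast f)h = \int\int \overline{f(z)}\,f(w)\,h(z^{-1}w)\,d\mu(z)\,d\mu(w)$ (legitimate because compact groups are unimodular, so both changes of variables are valid), then Riemann-type sums plus the $L^2$-continuity of the quadratic functional $f \mapsto \int(\tilde f \ast f)h$ for (ii) $\Rightarrow$ (i), and normalized indicator bumps $\psi_i^V = \mu(u_iV)^{-1}\one_{u_iV}$ concentrating at the prescribed points for (i) $\Rightarrow$ (ii); both halves are sound, and the second is essentially the same approximate-identity mechanism that underlies Folland's own proof. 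What the paper's citation buys is brevity and generality (Folland's result covers non-compact $G$ and functions defined only a.e., where unimodularity and pointwise evaluation are real issues); what your proof buys is that the lemma becomes elementary and self-contained, using nothing beyond uniform continuity, Fubini, and density of $C(G)$ in $L^2(G)$. One small rewording you should make: on a general compact group, which need not be metrizable, the phrase ``letting the mesh tend to zero'' has no literal meaning; instead, fix a neighborhood $V$ of the identity witnessing the uniform continuity of $h$, cover $G$ by finitely many translates $u_iV$, and disjointify these into measurable cells, so that the Riemann-sum error is controlled by the modulus of continuity over $V$. With that adjustment the argument is complete.
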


This statement is fully contained in the more general Proposition 3.35 in \cite{folland}. In fact, for compact groups Proposition 3.35 in \cite{folland} shows that instead of $L^2(G)$ the smaller class of continuous functions $C(G)$ or the wider class of absolute integrable functions $L^1(G)$ could also be taken in (i). All these cases are equivalent, but for us it will be convenient to use $L^2(G)$ in the sequel.

\medskip

We formulate another important property of positive definite functions.

\medskip

\begin{lemma}\label{posdeffunctions}
Let $G$ be a compact group and $\mu$ the normalized Haar measure on $G$. If $h: G\to \C$ is a continuous positive definite function then $\alpha=\int_G hd\mu\ge 0$, and for any $\alpha_0\le \alpha$ the function $h-\alpha_0$ is also positive definite. In other words, for any $m$ and any collection $u_1, \dots, u_m\in G$ and $c_1, \dots, c_m\in \C$ we have 
\begin{equation}\label{alpha}
\sum_{i,j=1}^m h(u_i^{-1}u_j)\overline{c_i}c_j\ge \alpha |\sum_{i=1}^{m}c_i|^2.
\end{equation}
\end{lemma}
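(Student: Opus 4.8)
The plan is to deduce both assertions directly from the positive-definiteness characterizations already in hand, exploiting the invariance of the Haar measure. First I would establish that $\alpha = \int_G h\,d\mu \ge 0$. The cleanest route is to use part (i) of Lemma \ref{folland}: take the constant function $f \equiv 1$, which lies in $L^2(G)$ since $G$ is compact and $\mu$ is a probability measure. Then $\tilde f \equiv 1$ as well, and a short computation with the convolution definition $f \ast g(x) = \int f(y) g(y^{-1}x)\,d\mu(y)$ gives $(\tilde f \ast f)(x) = \int 1 \cdot 1 \, d\mu(y) = 1$ for every $x$, using that $\mu$ is normalized. Hence $\int (\tilde f \ast f) h \, d\mu = \int h \, d\mu = \alpha$, and condition (\ref{ptype}) forces $\alpha \ge 0$.

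Next I would show that $h - \alpha_0$ is positive definite for every $\alpha_0 \le \alpha$, which by Lemma \ref{folland} amounts to verifying inequality (\ref{alpha}) with $\alpha_0$ in place of $\alpha$; establishing it for the largest admissible value $\alpha_0 = \alpha$ is enough, since subtracting a smaller constant only increases the left-hand side of (\ref{alpha}) relative to its right-hand side. The key observation is that the constant function $\one$ is itself positive definite with $\int_G \one \, d\mu = 1$. The natural idea is to produce a positive definite function whose value at the identity controls the mean of $h$. Concretely, I would consider averaging $h$ over the group: define the function obtained by integrating the two-sided translates of $h$,
\begin{equation*}
H(x) = \int_G \int_G h(s^{-1} x t)\, d\mu(s)\, d\mu(t).
\end{equation*}
By left- and right-invariance of $\mu$, $H$ is constant and equals $\alpha$ everywhere; moreover each translate $(s,t) \mapsto h(s^{-1} x t)$ inherits positive-definiteness suitably, and averaging preserves the positive-definite cone. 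This identifies $\alpha \cdot \one$ as a positive definite function dominated by $h$ in the appropriate sense.

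The cleanest way to finish, however, is to verify (\ref{alpha}) by a direct averaging argument on the quadratic form itself. Given $u_1, \dots, u_m$ and $c_1, \dots, c_m$, I would integrate the positive-definiteness inequality for $h$ applied to the translated points $g u_1, \dots, g u_m$ over $g \in G$: since $h\big((g u_i)^{-1}(g u_j)\big) = h(u_i^{-1} u_j)$, the translation leaves the quadratic form unchanged, so this by itself yields nothing new. The productive move is instead to enlarge the system: append to the points $u_1, \dots, u_m$ a continuum of copies and average, or more simply test positive-definiteness of $h$ against the measure $\sum_i c_i \delta_{u_i} - t\,\mu$ for a scalar $t$, treating $\mu$ as a limit of point masses. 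Expanding $\sum h(v_i^{-1}v_j)\overline{w_i}w_j \ge 0$ for this signed combination produces
\begin{equation*}
\sum_{i,j} h(u_i^{-1}u_j)\overline{c_i}c_j \;-\; 2\,\re\Big(\bar t \textstyle\sum_i c_i \int_G h(u_i^{-1}y)\,d\mu(y)\Big) \;+\; |t|^2 \int_G\int_G h(x^{-1}y)\,d\mu(x)\,d\mu(y) \;\ge\; 0,
\end{equation*}
where both integrals against $h$ reduce, by invariance of $\mu$, to the constant $\alpha$. Setting $t = \sum_i c_i$ and simplifying collapses this to exactly $\sum_{i,j} h(u_i^{-1}u_j)\overline{c_i}c_j \ge \alpha\,\abs{\sum_i c_i}^2$, which is (\ref{alpha}). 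The main obstacle I anticipate is the limiting argument that legitimizes testing positive-definiteness against the signed measure containing the continuous part $\mu$: one must either appeal to the $L^2$ formulation of Lemma \ref{folland} (writing $f = \sum_i c_i \eb_{u_i}^{\,\mathrm{approx}} - t\cdot\one$ via approximate identities) or approximate $\mu$ weakly by finitely supported probability measures and pass to the limit using continuity of $h$ and compactness of $G$. Once that approximation is justified, the algebra is routine, so I would concentrate the care there.
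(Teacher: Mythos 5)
Your proof is correct in substance, but it takes a genuinely different route from the paper's. The paper argues operator-theoretically in $L^2(G)$: it defines $(Hf)(x)=\int h(x^{-1}y)f(y)\,d\mu(y)$, notes that $H$ is positive self-adjoint (via Lemma \ref{folland}) with $H\one=\alpha\one$ (which already gives $\alpha=\inner{H\one}{\one}\ge 0$, essentially your $f\equiv 1$ computation), decomposes an arbitrary $f$ as $\beta\one+f_2$ with $f_2\perp\one$, and uses Haar invariance to get $\inner{Hf_2}{\one}=0$, whence $\inner{Hf}{f}=\Abs{\beta}^2\alpha+\inner{Hf_2}{f_2}\ge\Abs{\beta}^2\alpha$; Lemma \ref{folland} then converts this positive-type statement back into the discrete inequality \eqref{alpha}. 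You stay entirely on the discrete side: you test positive definiteness against $\sum_i c_i\delta_{u_i}-t\mu$ and set $t=\sum_i c_i$, with every term involving $\mu$ collapsing to $\alpha$ by invariance. The core mechanism is the same in both proofs (subtract the mean, i.e.\ project out the constant component; cross terms reduce to $\alpha$ by Haar invariance), but the trade-off is real: in the paper's $L^2$ framework every identity is exact and no limit is ever taken, whereas your argument hinges on the approximation step you flagged --- weak-$*$ approximation of $\mu$ by finitely supported measures (with nets if $G$ is not metrizable, and with a product-measure convergence argument for the $\mu$--$\mu$ term), or else approximate identities in the $L^2$ formulation of Lemma \ref{folland}. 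In exchange, your route via discrete measures needs no appeal to Lemma \ref{folland} for the main inequality at all, so it is more self-contained. Two details to fix when writing it up: combining the two cross terms into a single $-2\re(\bar t\,\alpha\sum_i c_i)$ and then cancelling against $\abs{t}^2\alpha$ uses that $h(x^{-1})=\overline{h(x)}$ and hence that $\alpha$ is real, a standard consequence of positive definiteness that should be stated explicitly; and the averaging detour with $H(x)=\int\int h(s^{-1}xt)\,d\mu(s)\,d\mu(t)$ should simply be deleted, since your final argument never uses it.
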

\begin{proof}
Let $f\in L^2(G)$ and define a linear operator $H:L^2(G)\to L^2(G)$ by
\[
(Hf)(x) = \int h(x^{-1}y) f(y) \,dy.
\]
As $h$ is assumed to be positive definite, $H$ is positive self-adjoint. Also, writing $\one$ for the constant one function on $G$ we have
\[
H\one = \alpha \one,\ \ \ \inner{H\one}{\one} = \alpha\ge 0.
\]
Let us use the notation $\beta= \int f$. We have the orthogonal decomposition
\[
f = \beta\one + f_2,\ \ \ \mbox{where } f_2 \perp \one.
\]
Using invariance of the Haar measure and exchanging the order
of integrations one can easily find that
\[
\inner{Hf}{\one} = \int (Hf)(x)\overline{\one(x)} \, dx = 
\int h(x)\, dx \int f(y)\, dy = \alpha\beta.
\]
Note that here we have used the mathematician's convention according to which the scalar product is linear in its first, and conjugate linear in its second variable. Thus $\inner{Hf_2}{\one} = 0$, since
\[
\beta \alpha = \inner{Hf}{\one} = \inner{H(\beta\one + f_2)}{\one} = \beta\alpha + \inner{Hf_2}{\one}.
\]
To show that $h-\alpha$ is positive definite, we need to check 
that
\[
\inner{Hf}{f} - \Abs{\beta}^2 \alpha\ge 0
\]
for all $f\in L^2(G)$. We have 
\[
\inner{Hf}{f} = \inner{\beta\alpha \one + Hf_2}{\beta\one+f_2} = \Abs{\beta}^2 \alpha + \inner{Hf_2}{f_2}
\]
since $f_2 \perp \one$ and $Hf_2 \perp \one$. Hence $\inner{Hf}{f} - \Abs{\beta}^2 \alpha = \inner{Hf_2}{f_2} \ge 0$.
\end{proof}

\medskip

After these preliminaries we can describe the non-commutative analogue of Delsarte's LP bound. (To the best of our knowledge the commutative version was first introduced by Delsarte in connection with binary codes with prescribed Hamming distance \cite{delsarte}. Another formulation of the non-commutative version is given in \cite{vallentinfilho}).

\medskip

\begin{theorem}\label{ncdelsarte} (Non-commutative Delsarte scheme for compact groups)\\
Let $G$ be a compact group, $\mu$ the normalized Haar measure, and let $A=A^{-1}\subset G$, $1\in A$, be given. Assume that there exists a positive definite function $h: G\to \R$ such that $h(x)\le 0$ for all $x\in A^c$, and $\int hd\mu>0$. Then for any $B=\{b_1, \dots b_m\}\subset G$ such that $b_j^{-1}b_k\in A^c\cup \{1\}$ the cardinality of $B$ is bounded by $|B|\le \frac{h(1)}{\int hd\mu}$.
\end{theorem}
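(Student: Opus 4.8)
The plan is to apply the quadratic-form inequality to the single configuration given by $B$ itself, with all coefficients equal to one, and then to sandwich the resulting sum between an upper bound coming from the sign condition on $h$ and a lower bound coming from Lemma~\ref{posdeffunctions}. Concretely, write $\alpha=\int_G h\,d\mu$ and consider the double sum
\[
S=\sum_{i,j=1}^m h(b_i^{-1}b_j).
\]

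For the upper bound I would separate the diagonal from the off-diagonal terms. The diagonal contributes $\sum_{i=1}^m h(b_i^{-1}b_i)=m\,h(1)$, since $b_i^{-1}b_i=1$. For $i\ne j$ the hypothesis $b_i^{-1}b_j\in A^c\cup\{1\}$ together with the distinctness of the elements of $B$ forces $b_i^{-1}b_j\in A^c$, precisely the region where the assumption $h\le 0$ applies; hence every off-diagonal term is nonpositive and
\[
S\le m\,h(1).
\]

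For the lower bound I would invoke inequality \eqref{alpha} of Lemma~\ref{posdeffunctions} with $u_i=b_i$ and $c_i=1$, which yields $S\ge \alpha\,\abs{\sum_{i=1}^m c_i}^2=\alpha m^2$. Combining the two bounds gives $\alpha m^2\le m\,h(1)$, and since $\alpha>0$ and $m\ge 1$ I may divide by $m\alpha$ to obtain $m\le h(1)/\alpha$, which is the claimed bound. I do not expect a genuine obstacle: the argument is the classical Delsarte sandwich, and all the analytic content—the nonnegativity of $\alpha$ and, crucially, the \emph{improved} inequality \eqref{alpha} (note that bare positive definiteness would only give $S\ge 0$, hence the useless $0\le m\,h(1)$)—has already been isolated in Lemma~\ref{posdeffunctions}. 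The one point demanding a little care is the reduction $b_i^{-1}b_j\in A^c$ for $i\ne j$, i.e. verifying that distinct elements of $B$ produce quotients that truly land in the forbidden-complement region rather than at the identity.
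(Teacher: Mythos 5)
Your proposal is correct and follows essentially the same route as the paper's own proof: the same double sum $S=\sum_{i,j} h(b_i^{-1}b_j)$, the same upper bound $S\le m\,h(1)$ from the sign hypothesis on $A^c$, and the same lower bound $S\ge \alpha m^2$ via inequality \eqref{alpha} of Lemma~\ref{posdeffunctions} with all $c_i=1$. The only difference is that you spell out explicitly the (correct) observation that distinctness of the $b_i$ forces off-diagonal quotients into $A^c$, a point the paper leaves implicit.
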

\begin{proof}
Consider
\begin{equation}\label{eq1}
S=\sum_{u,v\in B}h(u^{-1}v).
\end{equation}

On the one hand,
\begin{equation}\label{eq3}
S\le h(1)|B|,
\end{equation}
since all the terms $u\ne v$ are non-positive by assumption.

\medskip

On the other hand, applying \eqref{alpha} with $\alpha=\int h d\mu$, $u, v\in B$ and $c_u=c_v=1$, we get
\begin{equation}\label{eq2}
S\ge \alpha |B|^2.
\end{equation}

Comparing the two estimates \eqref{eq2}, \eqref{eq3} we obtain $|B|\le \frac{h(1)}{\int hd\mu}$.
\end{proof}

The function $h$ in the Theorem above is usually called a {\it witness function}.

\medskip

We will now describe how the problem of mutually unbiased bases fits into this scheme. Consider the group $U(d)$ of unitary matrices, being given with respect to some fixed orthonormal basis of $\C^d$. Consider the set $CH$ of complex Hadamard matrices. Following the notation of the Delsarte scheme above define $A^c=\frac{1}{\sqrt d} CH\subset U(d)$, i.e. let the {\it complement} of the forbidden set be the set of scaled complex Hadamard matrices. Then the maximal number of MUBs in $\C^d$ is exactly the maximal cardinality of a set $B=\{b_1, \dots b_m\}\subset U(d)$ such that all the quotients $b_j^{-1}b_k\in A^c\cup \{1\}$. After finding an appropriate witness function we can now give a new proof of the fact the number of MUBs in $\C^d$ cannot exceed $d+1$.

\begin{theorem}\label{thmgeneral}
The function $h(Z)=-1+\sum_{i,j=1}^d|z_{i,j}|^4$ (where $Z=(z_{i,j})_{i,j=1}^d\in U(d)$) is positive definite on $U(d)$, with $h(1)=d-1$ and $\int h=\frac{d-1}{d+1}$. Consequently, the number of MUBs in dimension $d$ cannot exceed $d+1$.
\end{theorem}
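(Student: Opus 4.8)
The plan is to prove the three stated facts about $h$ separately and then feed them into the non-commutative Delsarte bound of Theorem \ref{ncdelsarte}. The values $h(1)=d-1$ and $\int h\,d\mu=\frac{d-1}{d+1}$ are short computations; the genuine content, and the only real obstacle, is the positive definiteness of $h$. The difficulty is that $\sum_{i,j}|z_{i,j}|^4$ is manifestly a ``nice'' function but the $-1$ shift threatens positive definiteness, so I would isolate the two issues.

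For positive definiteness I would first drop the constant and work with $g(Z)=\sum_{i,j}|z_{i,j}|^4$. The key observation is that each summand $|z_{i,j}|^4=z_{i,j}^2\,\overline{z_{i,j}}^2$ is a matrix coefficient of the unitary representation $\rho(Z)=Z^{\otimes 2}\otimes\bar Z^{\otimes 2}$ of $U(d)$ on $(\C^d)^{\otimes 2}\otimes(\overline{\C^d})^{\otimes 2}$, where $\bar Z$ is the entrywise conjugate (again unitary). Writing $\xi_i=e_i\otimes e_i\otimes\bar e_i\otimes\bar e_i$ for the associated basis vectors, a direct check gives $\inner{\rho(Z)\xi_j}{\xi_i}=z_{i,j}^2\,\overline{z_{i,j}}^2=|z_{i,j}|^4$, so with $\Xi=\sum_i\xi_i$ we get
\[
g(Z)=\sum_{i,j}|z_{i,j}|^4=\inner{\rho(Z)\Xi}{\Xi}.
\]
A diagonal matrix coefficient of a unitary representation is automatically positive definite, since the relevant quadratic form equals $\norm{\sum_b c_b\,\rho(u_b)\Xi}^2\ge 0$; hence $g$ is positive definite. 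To pass from $g$ to $h=g-1$ I would invoke Lemma \ref{posdeffunctions}: as $g$ is continuous and positive definite, $g-\alpha_0$ is positive definite for every $\alpha_0\le\int g\,d\mu$, so it suffices to check $\int g\,d\mu\ge 1$, which the moment computation below supplies. I expect this representation-theoretic identification, together with the shift lemma, to be the crux of the argument.

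The two evaluations are then routine. At $Z=1$ only the diagonal entries contribute, each equal to $1$, giving $\sum_{i,j}|z_{i,j}|^4=d$ and $h(1)=d-1$. For the integral, invariance of $\mu$ under left and right multiplication by permutation and diagonal-phase matrices makes all $\int|z_{i,j}|^4\,d\mu$ equal, so $\int g\,d\mu=d^2\int|z_{1,1}|^4\,d\mu$. The fourth moment of a single Haar entry is the standard value $\int|z_{1,1}|^4\,d\mu=\frac{2}{d(d+1)}$ (obtainable, for instance, from the Weingarten formula at $n=2$, or by integrating the known density of $|z_{1,1}|^2$), whence $\int g\,d\mu=\frac{2d}{d+1}$ and $\int h\,d\mu=\frac{2d}{d+1}-1=\frac{d-1}{d+1}$. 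In particular $\int g\,d\mu=\frac{2d}{d+1}\ge 1$ for $d\ge 1$, exactly the inequality needed above.

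Finally I would apply Theorem \ref{ncdelsarte} with $A^c=\frac{1}{\sqrt d}CH$. The function $h$ is positive definite and real-valued, and $\int h\,d\mu=\frac{d-1}{d+1}>0$ for $d\ge 2$. On $A^c$ every entry of a scaled Hadamard matrix has modulus $\frac{1}{\sqrt d}$, so $\sum_{i,j}|z_{i,j}|^4=d^2\cdot d^{-2}=1$ and $h(Z)=0\le 0$ there; thus all hypotheses of the scheme hold. Since, as recalled just before the theorem, a system of MUBs corresponds exactly to a set $B\subset U(d)$ with all quotients $b_j^{-1}b_k\in A^c\cup\{1\}$, the bound yields $|B|\le\frac{h(1)}{\int h\,d\mu}=\frac{d-1}{(d-1)/(d+1)}=d+1$, which is the asserted bound on the number of MUBs (the case $d=1$ being trivial).
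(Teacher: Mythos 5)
Your proposal is correct and is essentially the paper's own proof: the paper likewise establishes positive definiteness of $h_0(Z)=\sum_{i,j=1}^d|z_{i,j}|^4$ by a tensor-doubling argument (writing $h_0(U_r^\ast U_t)=\tr Q_tQ_r$ with $Q_t=\sum_j P_{U_te_j\otimes U_te_j}$, so that the quadratic form becomes $\norm{\sum_t c_tQ_t}_{HS}^2$), which under the canonical identification of the Hilbert--Schmidt space of $(\C^d)^{\otimes 2}$ with $(\C^d)^{\otimes 2}\otimes(\overline{\C^d})^{\otimes 2}$ is precisely your statement that $h_0$ is the diagonal matrix coefficient $\inner{\rho(Z)\Xi}{\Xi}$ of the unitary representation $\rho(Z)=Z^{\otimes 2}\otimes\bar Z^{\otimes 2}$, with $Q_t$ corresponding to $\rho(U_t)\Xi$. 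The remaining steps coincide as well: subtracting the constant via Lemma \ref{posdeffunctions} with $\alpha_0=1$, the value $\int h_0=\frac{2d}{d+1}$ (which the paper cites from the Collins--Sniady integration formula rather than deriving it from the fourth moment $\int|z_{1,1}|^4\,d\mu=\frac{2}{d(d+1)}$ as you do), and the application of Theorem \ref{ncdelsarte} to $A^c=\frac{1}{\sqrt d}CH$.
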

\begin{proof}
Consider the function $h_0(Z)=\sum_{i,j=1}^d|z_{i,j}|^4$. First we prove that $h_0$ is positive definite. For this, recall that the Hilbert-Schmidt inner product of matrices is defined as $\inner{X}{Y}_{HS}=\tr (XY^\ast)$, and for any vector $v$ in a finite dimensional Hilbert space $H$ the (scaled) projection operator $P_v$ is defined as $P_v u=\inner{u}{v}v$. For any two vectors $u,v\in H$ we have $|\inner{u}{v}|^2=\tr P_uP_v$. Also, recall that the inner product on $H\otimes H$ is given by $\inner{u_1\otimes u_2}{v_1\otimes v_2}=\inner{u_1}{v_1}\inner{u_2}{v_2}$.

\medskip

Let $U_1, \dots, U_m$ be unitary matrices, $c_1, \dots, c_m\in \C$, and let $\{e_1, \dots, e_d\}$ be the orthonormal basis with respect to which the matrices in $U(d)$ are given. Then

\begin{equation}\label{h0}
|\langle U_r^\ast U_t e_j, e_k \rangle|^4=|\langle U_te_j, U_re_k\rangle|^4=|\inner{U_te_j\otimes U_te_j}{U_re_k\otimes U_re_k}|^2=
\end{equation}
\begin{equation*}
\tr P_{U_te_j\otimes U_te_j}P_{U_re_k\otimes U_re_k}.
\end{equation*}

Therefore, with the notation $Q_t=\sum_{j=1}^m P_{U_te_j\otimes U_te_j}$ we have

\begin{equation}
h(U_r^\ast U_t)=\sum_{j,k}|\langle U_r^\ast U_t e_j, e_k \rangle|^4= \tr Q_tQ_r.
\end{equation}

Finally,

\begin{equation}
\sum_{r,t=1}^m h(U_r^\ast U_t)\overline{c_r}c_t=\|\sum_{t=1}^m c_tQ_t\|_{HS}^2\ge 0,
\end{equation}
as desired.
\medskip

It is known \cite{integral} that the integral of $h_0$ on $U(d)$ is $\frac{2d}{d+1}$. By applying Lemma \ref{posdeffunctions} to $h_0$ with $\alpha_0=1<\int h_0$ we get that $h$ is also positive definite. Note also that $h$ vanishes on the set $\frac{1}{\sqrt d} CH$ of scaled complex Hadamard matrices, $h(1)=d-1$, and $\int h=\frac{2d}{d+1}-1=\frac{d-1}{d+1}$. Therefore, Theorem \ref{ncdelsarte} implies that the number of MUBs in $\C^d$ is less than or equal to $\frac{h(1)}{\int h}=d+1$.
\end{proof}

\medskip

We remark here that one could consider the witness functions $h_\beta=h_0-\beta$ for any $1\le \beta \le \frac{2d}{d+1}$. All these functions satisfy the conditions of Theorem \ref{ncdelsarte}. However, an easy calculation shows that the best bound is achieved for $\beta=1$.

\section{Dimension 6}\label{sec3}

In particular, let us examine the situation in dimension $d=6$.

\medskip

The function $h(Z)-1+\sum_{i,j=1}^d|z_{i,j}|^4$ in Theorem \ref{thmgeneral} was a fairly natural candidate which vanishes on the set of (scaled) complex Hadamard matrices $\frac{1}{\sqrt d}CH$, for any $d$. However, for $d=6$ we have other functions which are conjectured to vanish on $\frac{1}{\sqrt d}CH$. Namely, Conjecture 2.3 in \cite{mubmols} provides a selection of such functions. Let

\begin{equation}\label{m1}
m_1(Z)=\sum_{\pi \in S_6} \sum_{j=1}^{6}z_{\pi(1), j}z_{\pi(2), j}z_{\pi(3), j}\overline{z_{\pi(4), j}}\overline{z_{\pi(5), j}}\overline{z_{\pi(6), j}},
\end{equation}
where $S_6$ denotes the permutation group on 6 elements. Also, let $m_2(Z)=m_1(Z^\ast)$. Then $m_1$ and $m_2$ are real-valued (because each term appears with its conjugate), and they are conjectured to vanish on $\frac{1}{\sqrt d}CH$. This provides some natural candidates for witness functions for the MUB-problem. Namely, let $m(Z)=(m_1(Z)+m_2(Z))^2$, or $m(Z)=m_1^2(Z)+m_2^2(Z)$, or $m(Z)=(m_1(Z)m_2(Z))^2$. In all three cases $m(I)=0$, and $\int_{Z\in U(d)} m(Z)d\mu >0$. Therefore, if for any $\e>0$ the function $h(Z)+\e m(Z)$ is positive definite, we get a better bound than in Theorem \ref{thmgeneral}, and obtain that the number of MUBs in dimension 6 is strictly less than 7, i.e. a complete system of MUBs does not exist. Unfortunately, we have not yet been able to prove that such $\e>0$ exists for any of the functions $m(Z)$ above.

\medskip

Furthermore, as the inner sum in \eqref{m1} is conjectured to be zero for all $\pi\in S_6$, we may even multiply each term with $(-1)^{\sgn\pi}$, if we wish. This leads to other possible choices of $m(z)$.

\medskip

It would also be interesting to find any analogue of Conjecture 2.3 in \cite{mubmols} for any dimensions other than $d=6$.

\end{document}